\documentclass[11pt,a4paper]{amsart}
\usepackage[top=4cm,bottom=4cm,left=3.5cm,right=3.5cm]{geometry}

\usepackage{subcaption}
\usepackage{tabularx}
\usepackage{amsmath,amssymb,amsthm}
\usepackage[english]{babel}
\usepackage{mathtools}
\usepackage{enumerate}
\usepackage{algorithm}
\usepackage[noend]{algpseudocode}
\usepackage[numbers]{natbib}
\usepackage[hidelinks]{hyperref}
\usepackage{tikz}
\usepackage{tkz-graph}
\usepackage{tikz-cd}
\tikzset{%
    symbol/.style={%
        ,draw=none
        ,every to/.append style={%
            edge node={node [sloped, allow upside down, auto=false]{$#1$}}}
    }
}

\usepackage{xcolor}
\usepackage{todonotes}

\usepackage{algorithm}
\usepackage[noend]{algpseudocode}

\usepackage{algorithm}
\usepackage[noend]{algpseudocode}

\newtheorem{theorem}{Theorem}

\theoremstyle{definition}

\theoremstyle{remark}
\newtheorem{remark}[theorem]{Remark}

\numberwithin{equation}{section}

\newcommand\tab[1][1cm]{\hspace*{#1}}

\begin{document}

\title[Generalization of the Ball-Collision Algorithm]{Generalization
  of the Ball-Collision Algorithm}

\author[C. Interlando]{Carmelo Interlando}
\address{Department of Mathematics and Statistics \\
  San Diego State University \\
  San Diego, CA   92182-7720\\
} \email{carmelo.interlando@sdsu.edu}

\author[K. Khathuria]{Karan Khathuria}
\address{Institute of Mathematics\\
  University of Zurich\\
  Winterthurerstrasse 190\\
  8057 Zurich, Switzerland\\
} \email{karan.khathuria@math.uzh.ch}

\author[N. Rohrer]{Nicole Rohrer}
\address{Institute of Mathematics\\
  University of Zurich\\
  Winterthurerstrasse 190\\
  8057 Zurich, Switzerland\\
} \email{nicole.rohrer@uzh.ch}

\author[J. Rosenthal]{Joachim Rosenthal}
\address{Institute of Mathematics\\
  University of Zurich\\
  Winterthurerstrasse 190\\
  8057 Zurich, Switzerland\\
} \email{rosenthal@math.uzh.ch}

\author[V. Weger]{Violetta Weger}
\address{Institute of Mathematics\\
  University of Zurich\\
  Winterthurerstrasse 190\\
  8057 Zurich, Switzerland\\
} \email{violetta.weger@math.uzh.ch}

\thanks{The fourth author is thankful to Swiss National Science
  Foundation grant number 169510.}
\subjclass[2010]{}

\keywords{Coding Theory; ISD; Ball-Collision.}

\begin{abstract}
  In this paper we generalize the Ball-Collision Algorithm by
  Bernstein, Lange, Peters from the binary field to a general finite
  field. We also provide a complexity analysis and compare the
  asymptotic complexity to other generalized information set decoding
  algorithms.\end{abstract}

\maketitle

\section{Introduction}
\label{sec:introduction}

Since 1978 it has been known that decoding a random linear code is an
NP-complete problem, this was shown in \cite{berlekamp} by Berlekamp,
McEliece and van Tilborg.  Therefore the interesting task arises of
finding the complexity of decoding a random linear code using the best
algorithms available. Until today two main methods for decoding have
been proposed: information set decoding (ISD) and the generalized
birthday algorithm (GBA). The ISD is more efficient if the decoding
problem has only a small number of solutions, whereas GBA is efficient
when there are many solutions. Also other ideas such as statistical
decoding \cite{aljabri}, gradient decoding \cite{barg} and supercode
decoding \cite{barg2} have been proposed but fail to outperform ISD
algorithms.  An ISD algorithm is given a corrupted codeword and
recovers the message or equivalently finds the error vector.  ISD
algorithms are often formulated via the parity check matrix, since it
is enough to find a vector of a certain weight which has the same
syndrome as the corrupted codeword, this problem is also referred to
as the syndrome decoding problem.  ISD algorithms are based on a
decoding algorithm proposed by Prange \cite{prange} in 1962 and their
structures do not change much from the original: as a first step one
chooses an information set, then Gaussian elimination brings the
parity check matrix in a standard form and assuming that the errors
are outside of the information set, these row operations on the
syndrome will exploit the error vector, if the weight does not exceed
the given error capacity.

The problem of decoding random linear codes has recently been
receiving prominence with the proposal of using code-based public key
cryptosystems for an upcoming post-quantum cryptographic public key
standard.  The idea of using linear codes in public key
cryptography was first formulated by Robert
McEliece~\cite{mc78}. Since the publication of McEliece a large amount of
research has been done and the interested reader will find more
information in a recent survey~\cite{bo16}.

If the secret code is hidden well enough an adversary who wants to
break a code-based cryptosystem encounters the decoding problem of a
random linear code. It is therefore of crucial importance to
understand the complexity of the best algorithms capable of decoding a
general linear code.

The ISD algorithms were often considered when proposing a variant of
the McEliece cryptosystem, to find the key size needed for a given
security level.  ISD algorithms hence do not break a code-based
cryptosystem but they determine the choice of secure parameters.
Since some of the new proposals (for example \cite{ba16,dags,kh18})
involve codes over general finite fields, having efficient ISD
algorithms generalized to $\mathbb{F}_q$ is an essential problem.

Bernstein, Lange and Peters found a clever improvement of the ISD
algorithm which they called ball-collision decoding~\cite{ballcoll}.
The algorithm of Bernstein et. al. was presented for random binary
linear codes. The main contribution of our paper is a generalization
of the ball-collision decoding algorithm to arbitrary finite fields.

The paper is structured as follows: in Section \ref{previous_work} we
discuss the previous work on ISD algorithms focusing on those which
have been generalized to an arbitrary finite field. In Section
\ref{sec2} we describe the ball-collision algorithm over the binary
field and the notations and concepts involved in the algorithm. In
Section \ref{sec3} we present the ball-collision algorithm over
$\mathbb{F}_q$ and in Section \ref{sec:complexity} we perform the
complexity analysis of our algorithm including numerical parameter
optimization and asymptotic analysis.

\section{Related work} \label{previous_work} Many improvements have
been suggested to Prange's simplest form of ISD (see for example
\cite{chabanne, canteautsendrier, chabaud, dumer, kruk, leon,
  vantilborg}), they can be split into two types: improvements on the
Gaussian elimination step and a more probable and elaborated weight
distribution of the error vector.  The prior includes the work of
Canteaut and Chabaud \cite{canteaut}, where they show that the
information set should not be taken at random after one unsuccessful
iteration, but rather a part of the previous information set should be
reused and therefore a part of the Gaussian elimination step is
already performed. Whereas Finiasz and Sendrier \cite{sendrier} showed
that a complete Gaussian elimination is not necessary, both of these
improvements help to bring the cost of the Gaussian elimination step
down.


Now we focus on the second type of improvements, which were first
proposed for codes over the binary field and then later generalized
over an arbitrary finite field. The first improvement of Prange's ISD
was by Lee-Brickell \cite{leebrickell} in 1988, where in the
information set $p$ errors are assumed and $t-p$ outside. In 1993
Stern \cite{stern} proposed to partition the information set in to two
sets and ask for $p$ errors in each part and $t-2p$ errors outside the
information set. The generalization of both Lee-Brickell and Stern's
algorithm to a general finite field $\mathbb{F}_q$ were performed by
Peters \cite{peters} in 2010.

Niebuhr, Persichetti, Cayrel, Bulygin and Buchmann \cite{niebuhr} in
2010 improved the performance of ISD algorithms over $\mathbb{F}_q$
based on the idea of Finiasz-Sendrier \cite{sendrier} to allow the
errors to overlap in the information set.

In the past 10 years many other improvements were proposed for ISD
over $\mathbb{F}_2$. Namely, the ball-collision algorithm by
Bernstein, Lange and Peters \cite{ballcoll} in 2011, which splits the
information set in two sets, having $p_1$ and $p_2$ errors in them and
also splits the rest of the positions into three disjoint sets, having
$q_1, q_2$ and $t-p_1-p_2-q_1-q_2$ errors respectively. The
algorithm's name comes from a collision check, which builds the most
crucial part of the algorithm.

Later in 2011 May, Meurer and Thomae \cite{may} proposed an
improvement using the representation technique introduced by
Howgrave-Graham and Joux \cite{howgrave}. To this algorithm Becker,
Joux, May and Meurer \cite{becker} (BJMM) in 2012 introduced further
improvements. In the same year Meurer in his dissertation
\cite{meurer} proposed a new generalized ISD algorithm based on these
two papers.

In 2015, May-Ozerov \cite{mo} used the nearest neighbor algorithm to
improve the BJMM version of ISD. In 2016, Hirose \cite{hirose}
generalized the nearest neighbor algorithm over $\mathbb{F}_q$ and
applied it to the generalized Stern algorithm. Later in 2017, this was
applied to generalized BJMM algorithm by Gueye, Klamti and Hirose
\cite{klamti}.

In this paper we provide the missing generalization of the
ball-collision algorithm. The order of the complexities of ISD
algorithms over $\mathbb{F}_2$ is consistent also with their
generalizations over $\mathbb{F}_q$.

\section{Preliminaries}\label{sec2}

\subsection{Notation}

We first want to fix some notation: let $q$ be a prime power and let
$n, k, t \in\mathbb{N}$ be positive integers, such that $k,t<n$. We
will denote by $\mathbf{1}_n$ the $n \times n$ identity matrix.\\

For an $m\times n$ matrix $A$ and a set $S\subseteq\{1, ..., n\}$ of
size $k$, we denote by $A_S$ the $m\times k$ matrix consisting of the
columns of $A$ indexed by $S$.\\

For a set $S\subseteq \{1, ..., n\}$ of size $k$, we denote by
$\mathbb{F}_q^n(S)$ the vectors in $\mathbb{F}_q^n$ having support in
$S$. The projection of $x\in\mathbb{F}_q^n(S)$ to $\mathbb{F}_q^k$ is
then canonical and denoted by $\pi_S(x)$.\\

On the other hand we denote by $\sigma_S$ the canonical embedding of a
vector $x\in\mathbb{F}_q^k$ into $\mathbb{F}_q^n(S)$, where
$S\subseteq\{1, ..., n\}$ is again of size $k$.\\
	
For an $[n,k]$ linear code $\mathcal{C}$ over $\mathbb{F}_q$ we denote
by $H$ be the parity check matrix of size $(n-k)\times n$ and by $G$
the $k \times n$ generator matrix. We denote the Hamming weight of a
vector $x \in \mathbb{F}_q^n$, by $w(x)$.  The corrupted codeword
$c \in \mathbb{F}_q^n$ is given by $c =mG+e$, where
$m \in \mathbb{F}_q^k$ is the message and $e \in \mathbb{F}_q^n$ is
the error vector. The syndrome of $c$ is then defined as $s=Hc^\top$
and coincides with the syndrome of the error vector, since
$Hc^\top = H(mG+e)^\top = HG^\top m^\top + He^\top =He^\top$.

\subsection{Ball-collision algorithm over the binary field}

In what follows we describe the ball-collision algorithm over the
binary proposed in \cite{ballcoll} by Bernstein, Lange and Peters.


\begin{algorithm}[h!]
  \caption{Ball-collision over the binary}\label{ballcollalgo}
  \begin{flushleft}
    Input: The $(n-k)\times n$ parity check matrix $H$, the syndrome
    $s\in\mathbb{F}_2^{n-k}$ and the positive integers
    $p_1,\, p_2,\, q_1,\, q_2,\, k_1,\, k_2,\, \ell_1,\, \ell_2 \in
    \mathbb{Z}$,
    such that $k=k_1+k_2$, $p_i\leq k_i$, $q_i\leq \ell_i$ and
    $t-p_1-p_2-q_1-q_2\leq n-k-\ell_1-\ell_2$.\\
    Output: $e\in\mathbb{F}_2^n$ with $He^\top=s$ and $w(e)=t$.
  \end{flushleft}
  \begin{algorithmic}[1]
    \State Choose $I\subseteq\{1, ...,n\}$ a uniform random
    information set of size $k$.  \State Choose a uniform random
    partition of $I$ into disjoint sets $X_1$ and $X_2$ of size $k_1$
    and $k_2=k-k_1$ respectively.  \State Choose uniform random
    partition of $Y=\{1, ..., n\}\setminus I$ into disjoint sets
    $Y_1, Y_2$ and $Y_3$ of sizes $\ell_1,\ell_2$ and
    $\ell_3=n-k-\ell_1-\ell_2$.  \State Find an invertible matrix
    $U\in\mathbb{F}_2^{(n-k)\times(n-k)}$ such that
    $(UH)_Y=\mathbf{1}_{n-k}$ and $(UH)_I=\begin{bmatrix} A_1 \\ A_2
    \end{bmatrix}$,
    where $A_1\in\mathbb{F}_2^{(\ell_1+\ell_2)\times k}$ and
    $A_2\in\mathbb{F}_2^{\ell_3\times k}$.  \State Compute
    $Us=\begin{bmatrix} s_1 \\s_2
    \end{bmatrix}$
    with $s_1\in\mathbb{F}_2^{\ell_1+\ell_2}$ and
    $s_2\in\mathbb{F}_2^{\ell_3}$.  \State Compute the set $S$
    consisting of all triples
    $(A_1(\pi_I(x_1))+\pi_{Y_1 \cup Y_2}(y_1), x_1, y_1)$, where
    $x_1\in\mathbb{F}_2^n(X_1)$, $w(x_1)=p_1$ and
    $y_1\in\mathbb{F}_2^n(Y_1)$, $w(y_1)=q_1$.  \State Compute the set
    $T$ consisting of all triples
    $(A_1\pi_I(x_2)+\pi_{Y_1 \cup Y_2}(y_2)+s_1, x_2, y_2)$, where
    $x_2\in\mathbb{F}_2^n(X_2)$, $w(x_2)=p_2$ and
    $y_2\in\mathbb{F}_2^n(Y_2)$, $w(y_2)=q_2$ .  \For{each
      $(v, x_1, y_1)\in S$} \For{each $(v, x_2, y_2) \in T$}
    \If{$w(A_2(\pi_I(x_1+x_2))+s_2)=t-p_1-p_2-q_1-q_2$: } \Statex
    \tab[1.5cm] Output:
    $e=x_1+y_1+x_2+y_2+\sigma_{Y_3}(A_2(\pi_I(x_1+x_2))+s_2)$.  \Else
    \ Start over with Step 1 and a new selection of $I$.
    \EndIf
    \EndFor
    \EndFor
  \end{algorithmic}
\end{algorithm}

\begin{remark}
  Note that if $H$ is already in standard form, then $I=\{1, ..., k\}$
  and $U=\mathbf{1}_{n-k}$. In this case $H$ and $s$ can be written as
  \begin{equation*}
    (UH=)H=\begin{bmatrix}
      A_1 & \mathbf{1}_{\ell_1+\ell_2} &0 \\
      A_2 & 0 &\mathbf{1}_{\ell_3}
    \end{bmatrix}, \ (Us=)s=\begin{bmatrix} s_1 \\s_2
    \end{bmatrix}.
  \end{equation*}
\end{remark}


\subsection{Concepts}

There are a few concepts for computing the complexity of the
ball-collision algorithm introduced in \cite{ballcoll} that we will
use and present beforehand. In general the complexity of an ISD attack
consists of the cost of one iteration times the expected number of
iterations. The cost in the following refers to operations,
i.e. additions or multiplications, over the given field.

The success probability over the binary is usually given by having
chosen the correct weight distribution of the error vector. For
example let the error vector be of length $n$ having weight $t$, now
we assume that the error vector has weight $p$ in the information set,
i.e. in $k$ bits and the rest is redundant, then the success
probability is given by
\begin{equation*}
  \binom{k}{p} \binom{n}{t}^{-1}.
\end{equation*}
This will not change over $\mathbb{F}_q$, since the algorithm runs
through all elements in the finite field having support in those
chosen sets.

The concept of intermediate sums is important whenever one wants to
compute something for all vectors in a certain space. For example we
are given a $k \times n$ matrix $A$ and want to compute $Ax^\top$ for
all $x \in \mathbb{F}_2^n$, of weight $t$. This would usually cost $k$
times $t-1$ additions and $t$ multiplications, for each
$x \in \mathbb{F}_2^n$. But if we first compute $Ax^\top$, where $x$
has weight one, this only outputs the corresponding column of $A$ and
has no cost. From there we can compute the sums of two columns of $A$,
there are $\binom{n}{2}$ many of these sums and each one costs $k$
additions. From there we can compute all sums of three columns of $A$,
which are $\binom{n}{3}$ many and using the sums of two columns we
have already computed, means we only need to add one more column
costing $k$ additions. Proceeding in this way, until one reaches the
weight $t$, to compute $Ax^\top$ for all $x \in \mathbb{F}_2^n$, of
weight $t$ costs $k \cdot ( L(n,t) -n)$ additions, where
\begin{equation*}
  L(n,t) = \sum_{i=1}^t \binom{n}{i}.
\end{equation*}

This changes slightly over a general finite field. As a first step one
computes $Ax^\top$ for all $x \in \mathbb{F}_q^n$, of weight
$1$. Hence this step is no longer for free, but rather means computing
$A \lambda$ for all $\lambda \in \mathbb{F}_q^\star$, costing
$(q-1)kn$ multiplications. From there on one computes the sum of two
multiple of the columns, there are $\binom{n}{2}(q-1)^2$ many and each
sum costs $k$ additions. Hence proceeding in the same manner the cost
turns out to be $(q-1)kn$ multiplications and $(\bar{L}(n,t)-n(q-1))k$
additions,, where
\begin{equation*}
  \bar{L}(n,t) = \sum_{i=1}^t \binom{n}{i}(q-1)^i.
\end{equation*}

The next concept called early abort is also important whenever a
computation is done while checking the weight of the result. For
example one wants to compute $x+y$, where $x,y \in \mathbb{F}_2^n$,
which usually costs $n$ additions, but we only proceed in the
algorithm if $w(x+y) = t$. Hence we compute and check the weight
simultaneously and if the weight of the partial solution exceeds $t$
one does not need to continue. Over the binary one expects a randomly
chosen bit to have weight 1 with probability $\frac{1}{2}$, hence
after $2t$ we should reach the wanted weight $t$, and after $2(t+1)$
we should exceed the weight $t$. Hence on average we expect to compute
only $2(t+1)$ many bits of the solution, before we can abort. Over
$\mathbb{F}_q$, we expect a randomly chosen bit to have weight 1 with
probability $\frac{q-1}{q}$, therefore we need to compute
$\frac{q}{q-1}(t+1)$ many bits before we can abort.

An important step in the ball-collision algorithm is to check for a
collision, i.e. if $Ax^\top = By^\top$ one continues, where again
$A, B \in \mathbb{F}_2^{k \times n}$ and $x,y$ are living in some sets
$S$ and $T$ respectively. There are $\mid S \mid \cdot \mid T \mid$
many choices for $(x,y)$, assuming that they are distributed uniformly
over $\mathbb{F}_2^n$, then on average one expects the number of
collisions to be $\mid S \mid \cdot \mid T \mid 2^{-n}$.  Similarly
over $\mathbb{F}_q$ the number of collisions will be
$ \mid S \mid \cdot \mid T \mid q^{-n}.$

\section{Generalization of the Ball-Collision Algorithm}\label{sec3}

In this section we generalize the ball-collision algorithm over the
binary \cite{ballcoll} to a general finite field.

The algorithm requires a parity check matrix $H$. Notice that if the
generator matrix $G$ is published, the easiest way to get $H$ is to
choose an information set $I$ and to compute $\tilde{G}:=G_I^{-1}G$.

\begin{algorithm}[h!]
  \caption{Ball-collision over $\mathbb{F}_q$}\label{ballcoll2}
  \begin{flushleft}
    Input: The $(n-k)\times n$ parity check matrix $H$, the syndrome
    $s\in\mathbb{F}_q^{n-k}$ and the positive integers
    $p_1,\, p_2,\, q_1,\, q_2,\, k_1,\, k_2,\, \ell_1,\, \ell_2 \in
    \mathbb{Z}$,
    such that $k=k_1+k_2$, $p_i\leq k_i$, $q_i\leq \ell_i$ and
    $t-p_1-p_2-q_1-q_2\leq n-k-\ell_1-\ell_2$.

    Output: $e\in\mathbb{F}_q^n$ with $He^\top=s$ and $w(e)=t$.
  \end{flushleft}
  \begin{algorithmic}[1]
    \State Choose an information set $I\subseteq \{1, ..., n\}$ of $H$
    of size $k$.  \State Partition $I$ into two disjoint subsets $X_1$
    and $X_2$ of size $k_1$ and $k_2=k-k_1$ respectively.  \State
    Partition $Y=\{1, ..., n\}\backslash I$ into disjoint subsets
    $Y_1$ of size $\ell_1$, $Y_2$ of size $\ell_2$ and $Y_3$ of size
    $\ell_3=n-k-\ell_1-\ell_2$.  \State\label{find U} Find an
    invertible matrix $U\in\mathbb{F}_q^{(n-k)\times(n-k)} $, such
    that $(UH)_Y=\mathbf{1}_{n-k}$ and
    $(UH)_I=\begin{bmatrix} A_1 \\ A_2
    \end{bmatrix}$,
    where $A_1\in\mathbb{F}_q^{(\ell_1+\ell_2)\times k}$ and
    $A_2\in\mathbb{F}_q^{\ell_3\times k}$.  \State Compute
    $Us=\begin{bmatrix} s_1\\ s_2
    \end{bmatrix}$,
    where $s_1\in\mathbb{F}_q^{\ell_1+\ell_2}$ and
    $s_2\in\mathbb{F}_q^{\ell_3}$.  \For{each of \Statex-
      $V_1 \subset X_1$ of size $p_1$ \Statex- $V_2 \subset X_2$ of
      size $p_2$ \Statex- $W_1 \subset Y_1$ of size $q_1$ \Statex-
      $W_2 \subset Y_2$ of size $q_2$} \Statex Compute the following
    sets: \label{buildST} \Statex
    -$S=\{(A_1(\pi_I(x_1))+\pi_{Y_1 \cup Y_2}(y_1), x_1, y_1) \ | \
    x_1 \in\mathbb{F}_q^n(V_1), y_1 \in \mathbb{F}_q^n(W_1) \}$,
    \Statex
    -$T=\{(-A_1(\pi_I( x_2))+s_1-\pi_{Y_1 \cup Y_2}(y_2), x_2, y_2) \
    | \ x_2 \in\mathbb{F}_q^n(V_2), y_2\in \mathbb{F}_q^n(W_2)\}$.
    \EndFor
    \For{$(v, x_1, y_1)\in S$} \For{$(v, x_2, y_2)\in T$}
    \label{collT}
    \If{$w(-A_2(\pi_I( x_1+
      x_2))+s_2)=t-p_1-p_2-q_1-q_2$}\label{compute
      A2x2} \Statex \tab[1.5cm] Output:
    $e= x_1+x_2+ y_1+ y_2+ \sigma_{Y_3}(-A_2(\pi_I( x_1+ x_2))+s_2)$
    \Else \ go to Step 1 and choose new information set $I$.
    \EndIf
    \EndFor
    \EndFor
  \end{algorithmic}
\end{algorithm}

Again, as in the binary case, the idea of the algorithm is to solve
$UHe^\top=Us$ instead of $He^\top=s$, where an invertible $U$ is
chosen such that
$UH=\left[\begin{array}{cc} A &\mathbf{1}_{n-k} \end{array}\right]$
and $Us=\left[\begin{array}{c} s_1 \\ s_2
              \end{array}\right]$ 
            with
            $s_1 \in \mathbb{F}_q^{\ell_1+\ell_2},\,
            s_2\in\mathbb{F}_q^{\ell_3}$.
            We are therefore looking for a vector $e\in\mathbb{F}_q^n$
            fulfilling

$$UHe^\top=\left[\begin{array}{ccc}
                   A_1 &\mathbf{1}_{\ell_1+\ell_2} &0 \\
                   A_2 &0 &\mathbf{1}_{\ell_3}
                 \end{array}
               \right] \left[\begin{array}{c} e_1 \\ e_2 \\e_3
                             \end{array}\right]
                           =\left[\begin{array}{c}s_1 \\ s_2
                                  \end{array}\right],
$$
with
$e_1\in\mathbb{F}_q^k,\, e_2\in\mathbb{F}_q^{\ell_1+\ell_2},\,
e_3\in\mathbb{F}_q^{\ell_3}$.
This leads to the following system of equations:
\begin{align*}
  A_1e_1+e_2&=s_1, \\
  A_2e_1+e_3&=s_2. 
\end{align*}
The algorithm solves the above by finding
\begin{align*}
  &e_1=\pi_I(x_1+ x_2),  \\
  &e_2=\pi_{Y_1 \cup Y_2}( y_1 + y_2), \\
  &e_3=-A_2(\pi_I(x_1+x_2)) +s_2, 
\end{align*}
such that
\begin{align*}
  &A_1(\pi_I(x_1))+\pi_{Y_1 \cup Y_2}(y_1)=s_1-A_1(\pi_I(x_2))-\pi_{Y_1\cup Y_2}( y_2). 
\end{align*}
This last condition is fulfilled by the collision between $S$ and $T$ in Step \ref{collT}. \\
Observe that for $q=2$ the above algorithm is equivalent to the one
proposed over the binary. We hence did not change it in its
substantial form.
\\
We now want to prove that the ball-collision algorithm over
$\mathbb{F}_q$ works, i.e. that it returns any vector $e$ of the
desired form, if it exists. For this we follow the idea of
\cite{ballcoll}.

\begin{theorem}
  The ball-collision algorithm over $\mathbb{F}_q$ finds any vector
  $e$ that fulfills $UHe^\top=Us$ and is of the desired form - of
  weight $t$, with $p_1, p_2, q_1, q_2$ and $t-p_1-p_2-q_1-q_2$
  nonzero entries in $X_1, X_2, Y_1, Y_2$ and $Y_3$ respectively.
\end{theorem}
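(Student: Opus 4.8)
The plan is to prove completeness directly: assuming a vector $e$ of the prescribed form exists, I exhibit the precise iteration of the algorithm at which it is output. First I would decompose $e$ according to the fixed partition, writing $e = x_1 + x_2 + y_1 + y_2 + z$ with $x_1 \in \mathbb{F}_q^n(X_1)$ of weight $p_1$, $x_2 \in \mathbb{F}_q^n(X_2)$ of weight $p_2$, $y_1 \in \mathbb{F}_q^n(Y_1)$ of weight $q_1$, $y_2 \in \mathbb{F}_q^n(Y_2)$ of weight $q_2$, and $z \in \mathbb{F}_q^n(Y_3)$ of weight $t - p_1 - p_2 - q_1 - q_2$. This decomposition exists and is unique precisely because $X_1, X_2, Y_1, Y_2, Y_3$ partition $\{1, \dots, n\}$ and $e$ has exactly the stated number of nonzero entries in each block; in particular $e_1 = \pi_I(x_1 + x_2)$, $e_2 = \pi_{Y_1 \cup Y_2}(y_1 + y_2)$ and $e_3 = \pi_{Y_3}(z)$.

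Next I would locate the relevant pass through the outer loop. Taking $V_i = \operatorname{supp}(x_i)$ and $W_i = \operatorname{supp}(y_i)$ yields subsets of $X_1, X_2, Y_1, Y_2$ of the required sizes $p_1, p_2, q_1, q_2$, so this quadruple is among those the loop ranges over. For that choice $x_1 \in \mathbb{F}_q^n(V_1)$, $y_1 \in \mathbb{F}_q^n(W_1)$, so the triple $(A_1(\pi_I(x_1)) + \pi_{Y_1 \cup Y_2}(y_1),\, x_1,\, y_1)$ belongs to $S$, and likewise $(-A_1(\pi_I(x_2)) + s_1 - \pi_{Y_1 \cup Y_2}(y_2),\, x_2,\, y_2)$ belongs to $T$.

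The heart of the argument is then to check that these two triples collide and survive the weight test, and here the two block equations of the syndrome system carry the whole load. Since $e$ satisfies $UHe^\top = Us$, the block form gives $A_1 e_1 + e_2 = s_1$ and $A_2 e_1 + e_3 = s_2$. Rearranging the first as $A_1(\pi_I(x_1)) + \pi_{Y_1 \cup Y_2}(y_1) = s_1 - A_1(\pi_I(x_2)) - \pi_{Y_1 \cup Y_2}(y_2)$ shows the first coordinates of the two triples coincide, producing a collision in Step \ref{collT}. The second equation gives $-A_2(\pi_I(x_1 + x_2)) + s_2 = e_3$, whose weight equals $w(z) = t - p_1 - p_2 - q_1 - q_2$, so the test in Step \ref{compute A2x2} passes. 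Substituting back, the output $x_1 + x_2 + y_1 + y_2 + \sigma_{Y_3}(-A_2(\pi_I(x_1 + x_2)) + s_2)$ equals $x_1 + x_2 + y_1 + y_2 + z = e$.

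I do not expect a genuine obstacle here, since this is a completeness verification rather than a quantitative estimate. The only point that needs care is the bookkeeping of the canonical maps $\pi$ and $\sigma$ across the disjoint blocks: one must confirm that $\sigma_{Y_3}$ inverts $\pi_{Y_3}$ on $z$, and that the combined projections $\pi_I(x_1 + x_2)$ and $\pi_{Y_1 \cup Y_2}(y_1 + y_2)$ are well defined because $X_1, X_2 \subseteq I$ and $W_1 \subseteq Y_1$, $W_2 \subseteq Y_2$ sit in disjoint blocks. Once these identifications are pinned down, the two syndrome equations translate verbatim into the collision condition and the weight check, and the theorem follows.
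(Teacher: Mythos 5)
Your proof is correct, and it reaches the theorem by a genuinely different route than the paper. You prove completeness \emph{directly}: you decompose the hypothesized solution as $e = x_1+x_2+y_1+y_2+z$ along the partition (uniquely, since $X_1,X_2,Y_1,Y_2,Y_3$ partition $\{1,\dots,n\}$), pick the witnessing iteration $V_i = \operatorname{supp}(x_i)$, $W_i = \operatorname{supp}(y_i)$, and verify from the block system $A_1e_1+e_2=s_1$, $A_2e_1+e_3=s_2$ that the corresponding triples lie in $S$ and $T$, collide in Step \ref{collT}, pass the weight test, and reproduce $e$ exactly via $\sigma_{Y_3}\circ\pi_{Y_3}=\mathrm{id}$ on $\mathbb{F}_q^n(Y_3)$. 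The paper instead splits the burden in two: it first proves \emph{soundness} --- any vector the algorithm outputs has weight $t$, the prescribed weight distribution, and satisfies $UHe^\top=Us$, via the same block computation you use --- and then argues completeness indirectly, by observing that the loops range over all admissible supports and that the only two filtering steps (the collision and the weight check) are necessary conditions that any valid solution must satisfy, hence exclude nothing. Both arguments pivot on the identical algebra; the trade-off is that your witness-tracing version makes the completeness claim (which is what the theorem literally asserts) airtight and self-contained, where the paper's ``no solution is excluded'' passage is more of a meta-argument, while the paper's version additionally certifies every output of the algorithm as valid, a soundness guarantee your argument furnishes only for the particular iteration you trace (your traced iteration outputs $e$ itself, so soundness there is immediate). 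Your closing care about the canonical maps --- that $\pi_I(x_1+x_2)$ and $\pi_{Y_1\cup Y_2}(y_1+y_2)$ are well defined because the blocks are disjoint subsets of $I$ and $Y_1\cup Y_2$ respectively --- is exactly the bookkeeping the paper leaves implicit, and it is handled correctly.
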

\begin{proof}
  First, we want to prove, that the output $e$ is of the desired form:\\
  \begin{itemize}
  \item $x_1$ is of weight $p_1$ and in $\mathbb{F}_q^n(X_1)$,
  \item $ x_2$ is of weight $p_2$ and in $\mathbb{F}_q^n(X_2)$,
  \item $y_1$ is of weight $q_1$ and in $\mathbb{F}_q^n(Y_1)$,
  \item $ y_2$ is of weight $q_2$ and in $\mathbb{F}_q^n(Y_2)$,
  \item
    $w(-A_2(\pi_I(x_1 +x_2)) +s_2)=w(A_2(\pi_I( x_1 + x_2))
    -s_2)=t-p_1-p_2-q_1-q_2$ and it lies in $\mathbb{F}_q^n(Y_3)$.
  \end{itemize}
  As the above subspaces do not intersect, $w(e)$ can be calculated by
  adding up the weights of each of them. Hence $w(e)=t$ and each of
  the subspaces has the desired weight distribution by definition.

  It remains to prove that $UHe^\top=Us$.  Let us write each of the
  subspaces $\mathbb{F}_q^n(I), \mathbb{F}_q^n(Y_1\cup Y_2)$ and
  $\mathbb{F}_q^n(Y_3)$ separately.
  \begin{align*}
    UHe^\top&=\begin{bmatrix}
      A_1 &\mathbf{1}_{\ell_1+\ell_2} &0 \\
      A_2 &0 &\mathbf{1}_{\ell_3}
    \end{bmatrix}\begin{bmatrix}
      \pi_I( x_1+ x_2)\\
      \pi_{Y_1 \cup Y_2}( y_1+y_2)\\
      -A_2(\pi_I(x_1 + x_2))+s_2
    \end{bmatrix}\\
            &=\begin{bmatrix}
              A_1(\pi_I( x_1+ x_2)) +\pi_{Y_1 \cup Y_2}( y_1+ y_2)\\
              A_2(\pi_I(x_1+ x_2))-A_2(\pi_I( x_1 + x_2))+s_2
            \end{bmatrix}\\
            &=\begin{bmatrix}
              A_1(\pi_I( x_1+ x_2)) +\pi_{Y_1 \cup Y_2}(y_1+y_2)\\
              s_2
            \end{bmatrix}.
  \end{align*}
  And we know that
  $A_1(\pi_I(x_1+ x_2)) +\pi_{Y_1 \cup Y_2}( y_1+ y_2)=s_1$ by the
  collision of $S$ and $T$ in Step \ref{collT}.

  We now want to prove that the algorithm returns each of the above
  vectors such that $He^\top=s$ under the assumption, that we worked
  with a correct partitioning into $X_1, X_2, Y_1, Y_2, Y_3$. We do
  that by checking whether the algorithm considers all possible
  combinations and does not exclude any possible solution.

  $U$ is invertible and hence does not exclude any solution when
  multiplied to $H$ and $s$. In Step \ref{buildST}, where we build the
  sets $S$ and $T$, we go over all the possible sets $V_1, V_2, W_1$
  and $W_2$, which contain all possible vectors of the desired weight
  distribution. There are only two steps in the algorithm, where we
  exclude certain vectors:
  \begin{enumerate}
  \item When we only keep the collisions between $S$ and $T$ in Step
    \ref{collT}. But this is justified as $A_1e_1+e_2 =s_1$, i.e.
    \begin{equation*}
      A_1(\pi_I( x_1))+\pi_{Y_1 \cup Y_2}(y_1)=-A_1(\pi_I(x_2))+s_1-\pi_{Y_1 \cup Y_2}(y_2)
    \end{equation*}
    needs to be satisfied.
  \item When we check whether
    $w(-A_2(\pi_I(x_1 + x_2))+s_2)=t-p_1-p_2-q_1-q_2$. But also this
    is justified as $e_3 \in \mathbb{F}_q^{\ell_3}$ needs this weight
    to complete the weight of $e$ to be $t$.
  \end{enumerate}
  Hence we consider all possible error vectors that are of the given
  weight distribution and satisfy $UHe^\top=Us$.
\end{proof}

\section{Complexity Analysis} \label{sec:complexity} In this section
we want to analyze the complexity of the extended ball-collision
algorithm over $\mathbb{F}_q$. Since the cost will be given in
operations over $\mathbb{F}_q$, we will denote by $\mathcal{M}$ the
multiplications needed and by $\mathcal{A}$ the amount of additions.
Note that one addition over $\mathbb{F}_q$ costs $\log_2(q) $ bit
operations and one multiplication over $\mathbb{F}_q$ costs
$\log_2(q)\log_2(\log_2(q))\log_2(\log_2(\log_2(q)))$ bit operations.

\subsubsection*{Success Probability of one Iteration}

We follow the idea of \cite{ballcoll} as the success probability does
not depend on the base field, in fact: we have the same success
probability over $\mathbb{F}_q$ as over $\mathbb{F}_2$, since it only
depends on choosing the correct partition of the subspaces. The
success probability of one iteration equals the chances that there are
$p_i$ error bits in $X_i$, $q_i$ error bits in $Y_i$ and the remaining
ones in $Y_3$ - all for $i\in\{1, 2\}$. If this distribution is
assumed correctly, then the algorithm will find the error vector $e$
as it goes over all possible combinations of vectors in each of the
mentioned subspaces.  Hence the iteration succeeds with a probability
of

\begin{equation}\label{probsuccess fq}
  \binom{n}{t}^{-1} \binom{\ell_3}{t-p_1-p_2-q_1-q_2}\binom{k_1}{p_1}
\binom{k_2}{p_2}\binom{\ell_1}{q_1}\binom{\ell_2}{q_2}.
\end{equation}

\subsubsection*{Cost of one Iteration}

In Step \ref{find U} of the algorithm, one uses Gaussian elimination
to find an invertible matrix $U$, bringing $H$ into systematic form,
since we will also need to compute $Us$ we will directly perform
Gaussian elimination on the matrix $\begin{pmatrix} H \mid & s
\end{pmatrix}$,
where we adjoined the vector $s$ as a column to $H$.  A crude estimate
of the cost for this step is
$(n-k)(n+1)(n-k-1)(\mathcal{A} + \mathcal{M})$.

To build the set $S$ we want to use the concept of intermediate sums
over $\mathbb{F}_q$ described before. Hence to compute
$A_1(\pi_I(x_1))$, for all $x_1 \in \mathbb{F}_q^n(V_1)$ we need
$(q-1)(\ell_1+\ell_2)k_1$ multiplications and
$(\bar{L}(k_1, p_1)-k_1(q-1))(\ell_1+\ell_2)$ additions. To a fixed
$A_1(\pi_I(x_1))$, we then add $\pi_{Y_1 \cup Y_2}(y_1)$ again using
intermediate sums this costs $\bar{L}(\ell_1, q_1)$ additions for each
of the $x_1 \in \mathbb{F}_q^n(V_1)$, which are
$\binom{k_1}{p_1}(q-1)^{p_1}$ many. Hence resulting in a total cost of
\begin{align*}
  & (q-1)(\ell_1+\ell_2)k_1\mathcal{M} +  
   (\bar{L}(k_1, p_1)-k_1(q-1))(\ell_1+\ell_2)\mathcal{A} \\
  & + \binom{k_1}{p_1}\bar{L}(\ell_1, q_1)(q-1)^{p_1} \mathcal{A}.
\end{align*}


To build the set $T$ we proceed similarly, the only difference being
that $s_1$ needs to be added to the first step of the intermediate
sums over $\mathbb{F}_q$, hence adding a cost of
$(\ell_1+\ell_2)(q-1)k_2$ additions. The total cost of this step is
hence given by
\begin{align*}
  & (q-1)(\ell_1+\ell_2)k_2(\mathcal{M} +  \mathcal{A}) 
   + (\bar{L}(k_2, p_2-k_2(q-1)))(\ell_1+\ell_2) \mathcal{A} \\
  & + \binom{k_2}{p_2}\bar{L}(\ell_2, q_2)(q-1)^{p_2} \mathcal{A}.
\end{align*}

In Step \ref{collT}, when checking for collisions between $S$ and $T$,
we want to calculate the number of collisions we can expect on
average. The elements in $S$ and $T$ are all of length $\ell_1+\ell_2$
and hence there is a total of $q^{\ell_1+\ell_2}$ possible
elements. $S$ has $\binom{k_1}{p_1}\binom{\ell_1}{q_1}(q-1)^{p_1+q_1}$
many elements and $T$ has
$\binom{k_2}{p_2}\binom{\ell_2}{q_2}(q-1)^{p_2+q_2}$ many elements, we
therefore get that the expected number of collisions is
\begin{align*}
  \frac{\binom{k_1}{p_1}\binom{k_2}{p_2}\binom{\ell_1}{q_1}
\binom{\ell_2}{q_2}(q-1)^{p_1+p_2+q_1+q_2}}{q^{\ell_1+\ell_2}}.
\end{align*}

For each collision we have, we check whether
$w(-A_2(\pi_I(x_1+x_2))+s_2)=t-p_1-p_2-q_1-q_2$ is satisfied. For this
we will use the method of early abort: to compute one bit of the
result costs $(p_1+p_2+1)$ additions and $(p_1+p_2)$ multiplications,
hence this step costs on average
\begin{equation*}
  \frac{q}{q-1}(t-p_1-p_2-q_1-q_2+1)\left((p_1+p_2+1)\mathcal{A} + 
(p_1+p_2)\mathcal{M}\right).
\end{equation*}

Hence the total cost of one iteration is given by

\begin{align}\label{costiteration}
  &(n-k)(n+1)(n-k-1)(\mathcal{A} + \mathcal{M}) \nonumber \\
  & +  (\ell_1+ \ell_2)[ (q-1)((k_1+k_2) \mathcal{M}+k_2\mathcal{A}) \nonumber \\
  & + (\bar{L}(k_1, p_1)-k_1(q-1))\mathcal{A} +  (\bar{L}(k_2, p_2)-k_2(q-1))\mathcal{A} ] \nonumber \\
  & + \binom{k_1}{p_1}\bar{L}(\ell_1, q_1)(q-1)^{p_1}  \mathcal{A}  + 
   \binom{k_2}{p_2}\bar{L}(\ell_2, q_2)(q-1)^{p_2} \mathcal{A} \\
  & +\binom{k_1}{p_1}\binom{k_2}{p_2}\binom{\ell_1}{q_1}
   \binom{\ell_2}{q_2}(q-1)^{p_1+p_2+q_1+q_2}q^{-(\ell_1+\ell_2)}  \nonumber \\
  & \cdot \frac{q}{q-1}(t-p_1-p_2-q_1-q_2+1)\left((p_1+p_2+1)\mathcal{A} + (p_1+p_2)\mathcal{M}\right). \nonumber
\end{align}

\subsubsection*{Overall cost}

Combining the result from \eqref{probsuccess fq} and
\eqref{costiteration} the overall cost of the ball-collision algorithm
over $\mathbb{F}_q$ then amounts to

\begin{align*}
  &\binom{n}{t}\left(\binom{\ell_3}{t-p_1-p_2-q_1-q_2}\binom{k_1}{p_1}
   \binom{k_2}{p_2}\binom{\ell_1}{q_1}\binom{\ell_2}{q_2}\right)^{-1} \\
  & \cdot [(n-k)(n+1)(n-k-1)(\mathcal{A} + \mathcal{M})  \\
  & +  (\ell_1+ \ell_2)[ (q-1)((k_1+k_2) \mathcal{M}+k_2\mathcal{A})  \\
  & + (\bar{L}(k_1, p_1)-k_1(q-1))\mathcal{A} +  (\bar{L}(k_2, p_2)-k_2(q-1))\mathcal{A} ]  \\
  & + \binom{k_1}{p_1}\bar{L}(\ell_1, q_1)(q-1)^{p_1}  \mathcal{A}  
   + \binom{k_2}{p_2}\bar{L}(\ell_2, q_2)(q-1)^{p_2} \mathcal{A} \\
  & +\binom{k_1}{p_1}\binom{k_2}{p_2}\binom{\ell_1}{q_1}
   \binom{\ell_2}{q_2}(q-1)^{p_1+p_2+q_1+q_2}q^{-(\ell_1+\ell_2)}   \\
  & \cdot \frac{q}{q-1}(t-p_1-p_2-q_1-q_2+1)\left((p_1+p_2+1)\mathcal{A} + (p_1+p_2)\mathcal{M}\right)].
\end{align*}

\subsection{Asymptotic Complexity}

In this subsection we want to find the asymptotic complexity of the
ball-collision algorithm over
$\mathbb{F}_q$. 

Fix real numbers $0 < T < 1/2$ and $R$,
with $$-T\log_q(T)-(1-T)\log_q(1-T) \leq 1-R<1.$$

We consider codes of large length $n$, we fix functions
$k,t: \mathbb{N} \to \mathbb{N}$ which satisfy
$\lim_{n \to \infty}t(n)/n =T$ and $\lim_{n \to \infty}k(n)/n = R$.

We fix real numbers $P,Q,L$ with $0 \leq P \leq R/2, 0 \leq Q \leq L$
and $$0 \leq T - 2P-2Q \leq 1 - R - 2L.$$ We fix the parameters
$p_1,p_2, q_1,q_2,\ell_1, \ell_2, k_1, k_2$ of the ball-collision
algorithm over $\mathbb{F}_q$ such that
\begin{itemize}
\item[i)] $\lim_{n \to \infty} \frac{p_i}{n} = P, $
\item[ii)] $\lim_{n \to \infty} \frac{q_i}{n} = Q, $
\item[iii)] $\lim_{n \to \infty} \frac{k_i}{n} = R/2, $
\item[iv)] $\lim_{n \to \infty} \frac{\ell_i}{n} = L,$
\end{itemize}
for $i \in \{1,2\}$. We use the convention that $x \log_q(x) = 0$, for
$x=0$.  In what follows we will use the following asymptotic formula
for binomial coefficients:
\begin{equation*}
  \lim_{n \to \infty} \frac{1}{n} \log_q \binom{ \alpha + o(1)n}{\beta + o(1)n} = 
   \alpha \log_q(\alpha)-\beta\log_q(\beta)-(\alpha-\beta)\log_q(\alpha-\beta).
\end{equation*}

With this formula we get the following:
\begin{itemize}
\item[i)]
  $\lim_{n \to \infty} \frac{1}{n} \log_q \binom{n}{t}= -T
  \log_q(T)-(1-T)\log_q(1-T),$
\item[ii)]
  $\lim_{n \to \infty} \frac{1}{n} \log_q \binom{k_i}{p_i}=
  R/2\log_q(R/2)- P \log_q(P)-(R/2-P)\log_q(R/2-P),$
\item[iii)]
  $\lim_{n \to \infty} \frac{1}{n} \log_q \binom{\ell_i}{q_i}=
  L\log_q(L)- Q \log_q(Q)-(L-Q)\log_q(L-Q),$
\item[iv)]
  $\lim_{n \to \infty} \frac{1}{n} \log_q
  \binom{n-k-\ell_1-\ell_2}{t-p_1-p_2-q_1-q_2}=
  (1-R-2L)\log_q(1-R-2L)- (T-2P-2Q)
  \log_q(T-2P-2Q)-(1-R-2L-T+2P+2Q)\log_q(1-R-2L-T+2P+2Q).$
\end{itemize}

\subsection*{Success probability} 
We will denote by $S(P,Q,L)$ the asymptotic exponent of the success
probability:
\begin{eqnarray*}
  S(P,Q,L) &=& \lim_{n \to \infty}\frac{1}{n} \log_q \left( \binom{n}{t}^{-1}\binom{n-k-\ell_1-\ell_2}{t-p_1-p_2-q_1-q_2} 
    \binom{k_1}{p_1}\binom{k_2}{p_2}\binom{\ell_1}{q_1} \binom{\ell_2}{q_2} \right) \\
           &=& T\log_q(T)+(1-T)\log_q(1-T) +(1-R-2L)\log_q(1-R-2L) \\  & & - (T-2P-2Q)\log_q(T-2P-2Q) \\
           & & - (1-R-2L-T+2P+2Q)\log_q(1-R-2L-T+2P+2Q) \\
           & & + R\log_q(R/2)-2P\log_q(P)-(R-2P)\log_q(R/2-P) +2L\log_q(L) \\
           & & -2Q\log_q(Q)-2(L-Q)\log_q(L-Q).
\end{eqnarray*}

\subsection*{Cost of one iteration} 
We will denote by $C(P,Q,L)$ the asymptotic exponent of the cost of
one iteration.
\begin{eqnarray*}
  C(P,Q,L) & = & \lim_{n \to \infty}\frac{1}{n} \log_q \left(  \binom{k_1}{p_1}(q-1)^{p_1} +
          \binom{k_2}{p_2}(q-1)^{p_2}+\binom{k_1}{p_1}\binom{\ell_1}{q_1}(q-1)^{p_1+q_1} \right.  \\
           & & \left. + \binom{k_2}{p_2} \binom{\ell_2}{q_2}(q-1)^{p_2+q_2} + \binom{k_1}{p_1}\binom{k_2}{p_2}\binom{\ell_1}{q_1}
              \binom{\ell_2}{q_2}(q-1)^{p_1+p_2+q_1+q_2}q^{-\ell_1-\ell_2} \right) \\
           &=& \max \left\{ \log_q(q-1)P + R/2\log_q(R/2) - P \log_q(P)-(R/2-P) \log_q(R/2-P), \right. \\
           & & \log_q(q-1)(P+Q)+R/2\log_q(R/2) - P \log_q(P)-(R/2-P) \log_q(R/2-P) \\
           & & + L \log_q(L) -Q \log_q(Q)-(L-Q)\log_q(L-Q), \\
           & & \log_q(q-1)(2P +2Q)-2L+R\log_q(R/2) -2P\log_q(P) \\
           & &  \left. -(R-2P)\log_q(R/2-P)+2L\log_q(L)-2Q\log_q(Q)-(2L-2Q)\log_q(L-Q)\right\}.
\end{eqnarray*}


\subsection*{Overall cost}

The overall asymptotic cost exponent of the ball-collision algorithm
over $\mathbb{F}_q$ is given by the difference of $C(P,Q,L)$ and
$S(P,Q,L)$:
\begin{eqnarray*}
  D(P,Q,L) &=& \max \left\{ \log_q(q-1)P- R/2\log_q(R/2)+P\log_q(P)+(R/2-P)\log_q(R/2-P) \right. \\
           & & -2L\log_q(L)+2Q\log_q(Q)+2(L-Q)\log_q(L-Q),  \\
           & & \log_q(q-1)(P + Q) -R/2\log_q(R/2)+ P \log_q(P) + (R/2-P) \log_q (R/2-P) \\
           & & -L\log_q(L) + Q\log_q(Q) +(L-Q)\log_q(L-Q), \\
           & &  \left. \log_q(q-1)(2P+2Q)-2L \right\} -T \log_q(T) - (1-T)\log_q(1-T) \\
           & & (1-R-2L)\log_q(1-R-2L) + (T - 2P-2Q) \log_q(T-2P-2Q) \\
           & & +(1-R-2L-T+2P+2Q)\log_q(1-R-2L-T+2P+2Q).
\end{eqnarray*}
The asymptotic complexity is then given by $q^{D(P,Q,L)n + o(n)}$.

Asymptotically, we assume that the code attains the Gilbert-Varshamov
bound, i.e. the code rate $R = k/n$ and the distance $D = d/n$ relate
via: \begin{equation} R = 1 + D\log_q(D)+(1-D) \log_q(1-D) - D
  \log_q(q-1). \label{eq:gv_bound}
\end{equation}  

In order to compute the asymptotic complexity of half-distance
decoding (i.e.  $T = D/2$) for a fixed rate $R$, we performed a
numerical optimization of the parameters $P,Q$ and $L$ such that the
overall cost $D(P,Q,L)$ is minimized subject to the following
constraints:
\[0 \leq P \leq R/2, 0 \leq Q \leq L \mbox{ and } 0 \leq T - 2P-2Q
\leq 1 - R - 2L.\]

Let $F(q,R)$ be the exponent of the optimized asymptotic
complexity. The asymptotic complexity of half-distance decoding at
rate $R$ over $\mathbb{F}_q$ is then given by $q^{F(q,R)n+o(n)}$.

In Table \ref{table_compare}, the values refer to the exponent of the
worst-case complexity of distinct algorithms, i.e. $F(q,R_{w})$ where
$R_{w} = {\rm argmax}_{0<R<1} \left (F(q,R)\right
)$. 
It compares Peter's generalization of Stern's algorithm to
$\mathbb{F}_q$, Hirose 's generalization of Stern's algorithm using
May-Ozerov's nearest neighbor algorithm (MO) to $\mathbb{F}_q$, Gueye
\textit{et al.} generalization of the algorithm of BJMM using MO to
$\mathbb{F}_q$ and the generalization of the ball-collision algorithm
to $\mathbb{F}_q$.

\begin{table}[H]
  \begin{tabular}{c|c|c | c | c}
    $q$ & $q-$Stern & $q-$Stern-MO & $q-$BJMM-MO & $q-$Ball-collision  \\ \hline
    \rule{0pt}{1.1\normalbaselineskip}
    2 &  $0.05563$ & $0.05498$ & $0.04730$ & $0.055573$ \\
    3 &  $0.05217$ & $0.05242$ & $0.04427$ & $0.052145$ \\
    4 &  $0.04987$ &  $0.05032$&  $0.04294$& $0.049846$ \\
    5 &  $0.04815$ & $0.04864$ &  $0.03955$ & $0.048140$ \\
    7 & $0.04571$ &  $0.04614$&  $0.03706$ & $0.045697$ \\
    8 &  $0.04478$ &  $0.04519$& $0.03593$ & $0.044770$ \\
    11 &  $0.04266$ &  $0.04299$&  $0.03335$ & $0.042656$ \\
    \hline
  \end{tabular}
  \caption{Comparison of the asymptotic complexities over 
$\mathbb{F}_q$. The values $q-$Stern and $q-$Stern-MO are from 
\cite[Table 1]{hirose}, and $q-$BJMM-MO are from \cite[Table 3]{klamti}.}\label{table_compare}
\end{table}



We can observe that the ball-collision algorithm over $\mathbb{F}_q$
outperforms Peter's generalization of Stern's algorithm to
$\mathbb{F}_q$ and Hirose's ISD algorithm over $\mathbb{F}_q$, for all
$q \geq 3$. Like in the binary case, the ball-collision algorithm does
not outperform the generalization of Gueye \textit{et al.} of the BJMM
algorithm using MO to $\mathbb{F}_q$.

\section*{Acknowledgments}
The fourth author is thankful to the Swiss National Science Foundation
grant number 169510.

\bibliographystyle{plain} \bibliography{biblio}

\end{document}